\newcommand{\cat}{{}^\frown{}}
\newcommand{\trace}[1]{\langle #1 \rangle}
\newtheorem{theorem}{Theorem}
\newtheorem{corollary}[theorem]{Corollary}
\title{A Denotational Semantics for \\Communicating Unstructured Code}
\author{Nils Jähnig \qquad\qquad Thomas Göthel \qquad\qquad Sabine Glesner
\institute{Technische Universität Berlin, Germany}
\email{\{nils.jaehnig, thomas.goethel, sabine.glesner\}@tu-berlin.de}
}
\begin{document}
\maketitle

\begin{abstract}
An important property of programming language semantics is that they should be compositional.
However, unstructured low-level code contains {\tt goto}-like commands making it hard to define a semantics that is compositional. 
In this paper, we follow the ideas of Saabas and Uustalu~\cite{Saabas05acompositional} to structure low-level code. This gives us the possibility to define a compositional denotational semantics based on least fixed points to allow for the use of inductive verification methods. We capture the semantics of communication using finite traces similar to the denotations of CSP. In addition, we examine properties of this semantics and give an example that demonstrates reasoning about communication and jumps.
With this semantics, we lay the foundations for a proof calculus that captures both, the semantics of unstructured low-level code and communication.
\end{abstract}

\section{Introduction}
\label{sec:introduction}

Electronic devices are increasingly integrated into our daily lives. Most of them are ``invisible'' and would, ideally, run forever once deployed. Often this is not a crucial feature, as a smart phone or a server can be restarted if they crash. But in cases where the system cannot be restarted, the correctness of those systems needs to be ensured. Examples of such cases are medical devices and automotive systems.

By the {\em correctness of a system} we mean a {\em formal conformance relation} between an abstract specification and its implementation. We consider the specification language of Communicating Sequential Processes (CSP)~\cite{Schneider1999} as suitable technique to describe and verify the communication behavior of concurrent programs. The advantage is that properties can be verified on the abstract model of CSP and conformance ensures that these properties are preserved to the implementation level. 
We assume as in our modeling and verification approach presented in~\cite{VATES_Approach} that, having refined the CSP specification sufficiently, it is implemented in a high-level language such as C++ and then compiled to a low-level language such as LLVM. 
Instead of showing conformance between a high-level language and CSP, we show conformance of the corresponding low-level representation instead.
This approach has the advantage that a complex formal semantics of a high-level language can be avoided and tightens the gap between verified code and executed code. 
We choose an LLVM-like unstructured language that we previously presented in~\cite{Bartels:2014aa} because it is platform independent. In order to model communicating programs, we have included a generic instruction that allows for communication between unstructured programs.

In this paper, we give a formal denotational semantics to the implementation level focussing on communicating unstructured code. The particular challenge lies in the fact that denotational semantics for unstructured code are usually not compositional because of the presence of some variant of the {\tt goto} command. This usually requires the whole code available when showing properties.
Furthermore, we consider communication. 
To overcome the first problem, we structure unstructured code following ideas from Saabas and Uustalu~\cite{Saabas05acompositional}. While they define a compositional bigstep semantics for an unstructured language, we define a denotational semantics based on fixpoints, which additionally copes with communication and nontermination. The latter features are close to the approach taken in CSP. As we require our approach to enable automation and also to gain confidence in the correctness of our semantical construction, we formalize all results in the theorem prover Isabelle/HOL\footnote{When the intended Hoare calculus (see Section~\ref{sec:calculus}) is finished, we plan to publish the Isabelle theories in the {\em Archive of Formal Proofs} (\url{http://afp.sourceforge.net/}). If you are interested in the current version of the theories, please contact us.}~\cite{Nipkow2002}. Our denotational semantics gives us the groundwork for developing a compositional proof calculus that not only allows to formulate properties about possible states of unstructured programs but also allows us to reason about the communication behavior using traces.

The rest of this paper is structured as follows. Related work is dicussed in the next section. In Section~\ref{sec:background}, we give necessary background information about structured code, the concept of denotational semantics, CSP, and Isabelle/HOL. In Section~\ref{sec:cuc}, we introduce our low-level language, which we call {\em Communicating Unstructured Code} and for which we define a smallstep operational semantics in Section~\ref{sec:operational}. Our main contribution is described in Section~\ref{sec:denotational} where we introduce our denotational semantics for Communicating Unstructured Code. 
In Section~\ref{sec:analysis}, we analyze the denotational semantics and prove conformance to the intuitive operational semantics. 
In Section \ref{sec:calculus}, we define a rule to ease argumentation about the semantics. This gives an idea what a future calculus will look like. We illustrate the utility of our denotational semantics in Section~\ref{sec:example}. We give a conclusion and pointers to future work in Section~\ref{sec:conclusion}.

\section{Related Work}
There are some attempts to give unstructured code a semantics for later verification. Tews~\cite{Tews:2004aa} developed a compositional semantics for a C-like language with {\tt goto}, which is used to verify Duff's device. However, this approach does not model communication.

Saabas and Uustalu~\cite{Saabas05acompositional} present a compositional bigstep semantics of an unstructured language. To this end, a generic structuring mechanism for the code is presented, which makes the semantics compositional and also allows for a compositional proof calculus. Although they formally relate a high-level and a low-level language, they do not relate a process specification with the low-level language. Communication is not considered.

A denotational semantics and a proof calculus for a high-level language with communication were already defined by Zwiers~\cite{zwiers1989compositionality}. As low-level code is not considered, the semantics is not directly applicable, but we will follow this approach when designing a proof calculus in future work.
A preliminary rule of this calculus is presented in this paper.

Our unstructured language that we introduce in Section~\ref{sec:cuc} is based on our previous work~\cite{Bartels:2014aa} enhanced with communication capabilities. The approach in~\cite{Bartels:2014aa} focuses on a smallstep and a bigstep operational semantics based on which a compositional proof calculus is built. We used similar semantics in~\cite{BarGle2011} to show correspondence between unstructured code and (Timed) CSP processes.
In~\cite{BarGle2011} we used events as observation points, but did not consider actual communication.
In addition to considering also communication, we aim in future at relating unstructured code and CSP processes based on our denotational semantics, which facilitates refinement-based verification instead of bisimulation equivalence as used in~\cite{BarGle2011}.

\section{Background}
\label{sec:background}

In this section, we introduce the concept of structuring unstructered code and give brief introductions to denotational semantics and Communication Sequential Processes.

\subsection{Structured Code}
\label{ssec:structuredcode}

It is difficult to obtain a compositional semantics for unstructured low-level code, as compositionality usually involves splitting the code into parts according to its structure.
The simplest structure is a list of instructions, which can be split into head and tail. 
But also in unstructured code there are blocks of instructions that belong to the same functionality, and it is desirable to reason about such blocks as components and not instruction after instruction appended to the previous code (as it would be the case for lists). Therefore we choose a structure which allows for those blocks to be reasoned about as a component, namely trees.
To this end, we follow a similar approach to the one presented in~\cite{Saabas05acompositional} by Saabas and Uustalu.

In their approach, every instruction has a unique label (e.g., a natural number) that refers to it, and the semantics depends on an explicit {\em program counter} that holds the label of the next instruction to execute. After executing one instruction, the program counter is usually incremented by one so that it points to the next instruction. Only {\em control flow instructions}, e.g., {\tt goto} can set the program counter to a different value. This way, conditionals or loops can be constructed.

An {\em unstructured program} can be intuitively formalized as a set of {\em labeled instructions}, i.e., pairs of labels and instructions. All labels are assumed to be unique. The shortcoming of this formalization is that there is no useful structure on which induction can be applied. E.g., as the whole code needs to be considered all the time due to the {\tt goto} instructions, the operational smallstep semantics on it (see Section~\ref{sec:operational}) is not compositional\footnote{Compositionality here refers to ``sequential'' compsitionality. In this paper we do not consider ``parallel'' compositionality.} with respect to the program. So a denotational semantics on unstructured code will also not be compositional.

To overcome this problem, we introduce a tree structure on the instructions.
Let {\em code} adhere to the following grammar
\[
code \Coloneqq (label:: instruction) \mid code \oplus code
\] 
where $\oplus$ is the sequential composition operator. As described in the next subsection, we can use this structure to reason about the single components (i.e., subtrees),  and compose their meaning, thus obtain compositionality. 
To make sure that the structure does not influence the semantics of a program, all possibilities to structure a given set of instructions must have the same meaning (for our semantics this is a corollary of the conformance proved in Section~\ref{ssec:conformance}). We can choose which structure on the code fits our needs best. Usually, this is a structure where as many jumps as possible are inside a component. By this, the entry and exit possibilities for each component are minimized. 
This will help in future work when choosing pre and post conditions, as for known code, not every instruction label needs to be considered as a possible entry point.

It is important to note that even though we call $\oplus$ the {\em sequential composition} (opposed to a possible {\em parallel composition}), it is not the sequential composition known from structured programming languages.
The important difference is that instructions combined with $\oplus$ can create loops, if branch instructions are contained.
As loops are only dealt with in rules about $\oplus$, we call $\oplus$ the looping construct of the language.

To be able to relate structured and unstructured code, a projection $\cup_{sc}$ from {\em structured code} to {\em labeled instruction sets} is defined, which ``forgets'' the structure. 
Observe, that the operation $\oplus$ is only defined for structured code. 

Having introduced the concept of structured code in our setting, we are able to define a compositional denotational semantics of our low-level language. It is particularly amenable to fixpoint induction, which we describe subsequently.

\subsection{Denotational Semantics}
\label{ssec:denotationalSemantics}

In contrast to operational semantics, which describes how single instructions manipulate the state, in denotational semantics each program is assigned a function (its {\em denotation}) which maps an initial state to the meaning of the program. This can be, e.g., a final state or a set of traces as in the case of CSP. 
As denotational semantics can be characterized via operational semantics  and operational semantics can be defined compositionally, boundaries between the two can be vague.

To be able to assign a denotation to programs containing loops, a function $F$ from {\em denotation} to {\em denotation} is considered, which extends the denotation with the semantics for one execution of the loop. The least fixpoint of this function $F$ is then the denotation of the program containing the loop.

To define a least fixpoint, we use a chain-complete partial order on the denotations.
This is usually obtained by lifting the chain-complete partial order on the states, i.e., co-domain of the denotations.
In general, the resulting denotation may not be computable in finite time, but fixpoint induction can be used to reason about properties of denotations of programs containing loops. 
Fixpoint induction (or {\em Scott induction}, e.g., in~\cite{Winskel:1993:FSP:151145}) requires the function $F$ to be continuous, and the property of interest $I$ to be admissible, i.e., if $I$ holds for all elements of a chain, then $I$ also holds for the supremum of the chain. Finally, for every denotation $f$, if $I(f)$ holds, then $I(F(f))$ has to hold as well (step case).

\subsection{Communicating Sequential Processes}
\label{ssec:csp}

{\em Note: CSP is not necessary for the following definitions, but is helpful as the denotational semantics is designed with CSP in mind.}

Communicating Sequential Processes (CSP)~\cite{Schneider1999} is a process algebra, originally introduced in~\cite{Hoare:1978:CSP:359576.359585}.
It is designed specifically to model concurrent processes, which communicate via events. Communication is synchronous and thus can be used to synchronize processes or exchange data.

Processes can be constructed from the basic processes STOP and SKIP and using operators such as event prefixing, external and internal choice, interrupts, and sequential and parallel composition. 

CSP is suited for modeling various layers of abstraction as described in~\cite{Schneider1999}: Specification, design and implementation, where specification is most abstract, and implementation can be directly converted into program code. Abstraction levels can be mixed in a process, and CSP processes across all abstraction levels can be put in relation via {\em refinements}. Informally a process $Q$ refines a process $P$, if the behavior of process $Q$ is a subset of the behavior of process $P$.

There are two important semantic models for CSP with raising complexity and expressiveness: {\em 1)} The trace semantics, which describes the communication histories of processes, which we use in this paper. {\em 2)} The stable failures semantics, which additionally captures the events a process can refuse after a trace.
In this paper, we focus on traces, but we plan to extend it to failures in future work.

The automatic refinement checker FDR3~\cite{DBLP:conf/tacas/Gibson-RobinsonABR14} supports refinement checks for both mentioned semantics.

\section{Communicating Unstructured Code}
\label{sec:cuc}

In this section, we introduce {\em Communicating Unstructured Code} (CUC), which we adapt from our previous work~\cite{Bartels:2014aa} and enhance it with a communication primitive.

On the one hand, we want to be as close to low-level code as possible to reduce the gap between executed code and verified code.
On the other hand, we want to focus on communication and not its implementation.
Therefore, we decided to use an unstructured language with a higher level construct for communication. 

We keep our language generic and simple. Though being simple, it is powerful (as it contains conditional branches). This simplicity allows for manageable semantics design and proofs.
Our language consists of three basic instructions: 
\[
instruction \coloneqq \texttt{do f} \mid \texttt{cbr b m n} \mid \texttt{comm ef f}
\]
In the following, we give an informal explanation of these instructions.

\textbf{\texttt{do f}} -- The command \texttt{do} is a generalized assignment. \texttt{f} is a function from {\em state} to {\em set of states} and it is applied  to the current state. The resulting state is one element of the set returned by {\tt f}. The instruction can thus be thought of as a {\em nondeterministic multiple assignment}.

\texttt{cbr b m n} -- The instruction {\tt cbr} is a usual conditional branch. If the function \texttt{b} from {\em state} to {\em bool} evaluates to {\em True} then the {\em program counter} is set to \texttt{m} else to \texttt{n}.

\texttt{comm ef f} --  The command {\tt comm} is the communication primitive. It communicates an event from the result of \texttt{ef} and then changes the state according to \texttt{f}, where \texttt{ef} is a function from {\em state} to {\em set of events} and \texttt{f} is a function from {\em state} and {\em event} to {\em state}. Observe that here \texttt{f} is deterministic to ease reasoning. We reserve nondeterminism of the successor state to the instruction \texttt{do f}.

The instructions \texttt{do f} and \texttt{comm ef f} are abstract instructions (instructions schemes), which can be instantiated. For example, the instance on the left models an assignment of $y$ to $x$. The instance on the right adds $x$ to $y$ and stores the result to $z$.

\begin{center}
\begin{tabular}{c p{10mm} c}
``$x \coloneqq y$'' & &
``$z \coloneqq x + y$''
\\
$\texttt{do } (\lambda \sigma.\; \sigma[x \leftarrow \sigma(y)])$ & &
$\texttt{do } (\lambda \sigma.\; \sigma[z \leftarrow \sigma(x)+\sigma(y)])$
\end{tabular} 
\end{center}

\noindent The next instance defines an input of values of type $T$ over channel $in$ storing the value in variable $x$.

\begin{center}
``$x \coloneqq \text{input}(in::T)$''

$\texttt{comm } (\lambda \sigma.\; \{in.v \mid v \in T\})(\lambda \sigma\; ev .\; \sigma[x \leftarrow val(ev)])$
\end{center}

\noindent In the last example and in Section~\ref{sec:example}, events are of the type $channels \times values$ (written in the dot-notation $channel.value$ as usual in CSP). This is just a convenient instantiation, as there are no strucutral requirements on the type of events (also as in CSP). Apart from the chosen type of events, there is no concept of channels in CUC.

In this section, we have presented the instructions of our language and have given its informal semantics. 
In the next section, we first define the more intuitive smallstep semantics to capture our ideas, and then define the denotational semantics in Section~\ref{sec:denotational}, as it fits better our ultimate goal to relate CUC programs and CSP processes. 
In Section~\ref{ssec:conformance}, we show conformance between the smallstep and the denotational semantics, to ensure that the denotational semantics adheres to our ideas.

\subsection{Operational Smallstep Semantics}
\label{sec:operational}

As we consider unstructured code, a program is given as a set of {\em labeled instructions}, i.e., pairs of identifier (label) and instructions. We assume that the identifiers are unique.

We model the state as consisting of three parts: The usual state (i.e., the values of variables), the program counter, and the trace (i.e., the communication history as a list of events). 
Observe that each instruction roughly corresponds to one part of the state: \texttt{do} changes the variables,  \texttt{cbr} the program counter, and \texttt{comm} the trace. The exceptions are that both, \texttt{do} and \texttt{comm}, increase the program counter by one, and that \texttt{comm} can also change the state to store communicated values.
See Figure~\ref{fig:smallstep} for the  smallstep semantics, which captures the behavior described in Section~\ref{sec:cuc}.
Each instruction can only be applied if the $pc$ is pointing to its label. We explain the rules in detail:

{\sc S-do} -- Only the variables are changed, possibly nondeterministically, according to $f$. Traces stay the same and the program counter is increased by one.

{\sc S-cbr} --  There are two rules, one for the case when the condition $b$ evaluates to True, and one when it evaluates to False. In both cases only the $pc$ is changed accordingly, and variables and trace remain the same.

{\sc S-comm} -- $ef$ determines which events are offered for communication. The communicated event is appended to the existing trace. The successor state is calculated according to $f$, this time deterministically and additionally depending on the communicated event. $f$ is deterministic to focus on communication. Again, the program counter is increased by one.

\newcommand{\smallstep}{\xrightarrow{code}}
\begin{figure}[h]
\[
\sigma \xrightarrow{code} \vartheta \coloneqq (\sigma, \vartheta) \in smallstep(code) 
\]

\[
\inferrule*[right=S-do]
{(\ell :: \texttt{do }f) \in code \\
t \in f(s)}
{(tr, s, \ell) \xrightarrow{code} (tr, t, \ell + 1)}
\]

\[
\inferrule*[right=S-cbr True]
{(\ell :: \texttt{cbr }b\;m\;n) \in code \\
b(s) = \text{True}
}
{(tr, s, \ell) \xrightarrow{code} (tr, s, m)}
\]

\[
\inferrule*[right=S-cbr False]
{(\ell:: \texttt{cbr }b\;m\;n) \in code \\
b(s) = \text{False}
}
{(tr, s, \ell)\xrightarrow{code} (tr, s, n)}
\]

\[
\inferrule*[right=S-comm]
{(\ell:: \texttt{comm }ef\;f) \in code \\ 
ev \in ef(s)\\
tr_t = tr_s\cat ev\\
t = f(s, ev)
}
{(tr_s, s, \ell) \xrightarrow{code} (tr_t, t, \ell+1)}
\]
\caption{Smallstep Semantics for CUC}
\label{fig:smallstep}
\end{figure}

The smallstep semantics defines the executions of single instructions.
To relate the operational semantics with the denotational semantics we need the {\em multistep} relation, which is the reflexive transitive closure of the smallstep relation, and defines the semantics for executions of a set of instructions.
See the multistep semantics in Figure~\ref{fig:multistep}.%
\newcommand{\multistep}[1]{\xrightarrow{#1}\hspace{-4pt}{}^*}
\begin{figure}
\[
\sigma \multistep{code} \vartheta \coloneqq (\sigma, \vartheta) \in multistep(code) 
\]

\[
\inferrule*[right=M-base]
{ }
{(tr, s, \ell ) \multistep{code} (tr, s, \ell)}
\]

\[
\inferrule*[right=M-step]
{(tr_s, s, \ell_s) \smallstep (tr_{u}, u, \ell_{u})\\
(tr_{u}, u, \ell_{u}) \multistep{code} (tr_t, t, \ell_t)}
{(tr_s, s, \ell_s) \multistep{code} (tr_t, t, \ell_t)}
\]
\caption{Multistep Semantics for CUC}
\label{fig:multistep}
\end{figure}
The first rule ({\sc M-base}) is the base case, stating that every step is reachable from itself (reflexivity); the second rule ({\sc M-step}) is the step rule, stating that if a state is reachable, its smallstep successors are reachable, too (transitivity).
Please note that the operational semantics is not defined for $\oplus$, as the operational semantics is defined on unstructured code.

\subsection{Denotational Semantics}
\label{sec:denotational}

We presented previously a compositional bigstep semantics for a low-level language {\em without} communication in~\cite{Bartels:2014aa}. In contrast, we require our extended semantics {\em with} communication to be close to the denotational CSP semantics. To this end, we define a denotational semantics which captures the state information but also the finite traces of program executions.

See Figure~\ref{fig:denotational} for the denotational semantics defined on the structured variant of CUC. 
The semantic function $\llbracket \cdot \rrbracket$ maps {\em code} to its {\em denotation}.
We use the $\mu$ operator to denote a {\em least fixpoint} and the underlying chain-complete partial order is the point-wise subset relation \mbox{($f \leq g \coloneqq \forall S. \; f(S) \subseteq g(S)$} where $S$ is a set of states).

\begin{figure}

\begin{alignat*}{2}
    \textsc{D-do} &&&\\
    && \llbracket\ell:: \texttt{do }f\rrbracket(S) & \coloneqq
    S \cup \{(tr, t, \ell + 1) \mid (tr, s, \ell) \in S \wedge t \in f(s)  \} \\
\\
    \textsc{D-cbr} &&&\\
    && \llbracket\ell :: \texttt{cbr }b\;m\;n\rrbracket(S) & \coloneqq S \cup \{(tr, s, pc) \mid (tr, s, \ell) \in S \wedge ( b(s) \wedge pc=m \vee \neg b(s) \wedge pc=n)  \} \\
\\
    \textsc{D-comm} &&&\\
    && \llbracket\ell:: \texttt{comm }ef\; f\rrbracket(S) & \coloneqq S \cup \{(tr\cat ev, t, \ell + 1) \mid 
    (tr, s, \ell) \in S \wedge ev \in ef(s) \wedge t = f(s, ev)  \}
\end{alignat*}
\begin{alignat*}{2}
    \textsc{D-seq} &&&\\
    && \llbracket code_1 \oplus code_2\rrbracket & \coloneqq \big(\mu d \hspace{-0.4mm}.\; extend(code_1, code_2) \; (d)\big)
\\
    \textsc{D-ext} &&&\\
    && extend(code_1, code_2)\;  (d) & \coloneqq
    \lambda S. \; S \cup d\big(\llbracket code_1\rrbracket(S)\big) \cup d\big(\llbracket code_2\rrbracket(S)\big)
\end{alignat*}

\caption{Denotational Semantics for CUC}
\label{fig:denotational}
\end{figure}

The smallstep semantics operates on single states, but we use sets of states in the denotational semantics to capture nondeterminism. Furthermore, we require sets also as input of the denotations so that we are able to chain functions directly. This facilitates sequential composition, as introduced in the background. 
Finally, we can keep track of all intermediate states and the corresponding traces, by simply extending the set input to the denotation (``$S \; \cup$'' on the right-hand side of function definitions). 
This construction is also used in CSP and enables {\em prefix closure} (see Section~\ref{ssec:prefixclosure}).

The single step rules {\sc D-do}, {\sc D-cbr} and {\sc D-comm} are very similar to their operational semantics counterparts. For all states where the program points to the label $\ell$ of the instruction, all successor states are added to the resulting set.

$extend$ ({\sc D-ext}) extends a given denotation $d$ with the executions of the two components $code_1$ and $code_2$ respectively. ($extend$ corresponds to the function $F$ in Section~\ref{ssec:denotationalSemantics}.) With its help we describe in {\sc D-seq} the sequential composition $\oplus$ which is modeled as fixpoint of $extend$ and thus as the repeated, possibly alternating application of the two components. Remember, that $\oplus$ is also the looping construct (see Section~\ref{ssec:structuredcode}).

We have presented the operational and denotational semantics and show important properties of the latter in the next section.

\section{Analysis of the Denotational Semantics}
\label{sec:analysis}

Our longterm goal is to define a refinement proof calculus, which relates CUC programs and CSP processes. This can only succeed, if we can ensure that the CUC semantics is close enough to the CSP semantics. Therefore we show a property related to structured code and two selected properties of CSP for the denotational semantics: {\em i)} Compositionality, which is also important for the future construction of a compositional calculus, and {\em ii)} prefix closure, which will be important when showing that a CUC program refines a CSP process considering failures (which allow for the verification of liveness properties) in future work.
We start outlining the proof that the denotational semantics conforms to the operational multistep semantics, to ensure that the ideas formulated in the smallstep semantics are also captured by the denotational semantics.

\subsection{Conformance Proof}
\label{ssec:conformance}

By conformance we mean that, starting in an arbitrary state, the reachable states through the denotational semantics and the multistep semantics of a given structured code and its unstructured projection respectively are the same: 

\[
(tr_t, t, pc_t)\in \llbracket code\rrbracket(\{(tr_s, s, pc_s)\})
\Longleftrightarrow
(tr_s, s, pc_s) \multistep{\cup_{sc}code} (tr_t, t, pc_t) 
\]

We outline the two directions of the conformance proof. We have formalized full proofs in Isabelle/HOL. First, we argue that all states in the returned set of a denotation are reachable through the multistep relation, and then the inverse direction. As we use the structuring technique for unstructured code from Saabas and Uustalu~\cite{Saabas05acompositional}, we follow their proofs adapted for our denotational semantics.
In both directions we use the similarity of single steps of the two semantics. The interesting case in each direction is the sequential composition case.

\paragraph{Denotational Semantics $\Longrightarrow$ Multistep}
We perform induction over the structure of {\em code}.
The induction hypothesis states that for the subcomponents the denotational semantics implies the multistep semantics. To show the sequential composition case, we combine the hypotheses for the subcomponents using fixpoint induction.
 
\paragraph{Multistep $\Longrightarrow$ Denotational Semantics}
The main problem is that the multistep relation operates on instruction sets, thus is insensible to the structure of {\em code}. We need to extend the {\em multistep} relation with a step counter $k$, in order to invoke induction over the number of steps.  

We still perform induction over the structure of $code$, but within each case we use case-distinction or induction over $k$, respectively. 

An important insight from Saabas and Uustalu~\cite{Saabas05acompositional} for the induction in the case of the sequential composition is that each time we point into one part of the code, we make at least one step (in multistep), and thus in the other part, we have less steps left than we started with. This way we can decrease the value of $k$ and use the induction hypothesis for the single component and for a smaller $k$, respectively.

\subsection{Properties} 

In this subsection, we show that $\oplus$ is associative and commutative, and that the denotational semantics is compositional and prefix closed.

\subsubsection{Associativity and Commutativity}
\label{sssec:basic}

\begin{corollary}[Associativity and Commutativity]
$\oplus$ is associative and commutative, i.e.,

$\forall code_1, code_2, code_3$: 
\[
\llbracket code_1 \oplus code_2 \rrbracket = \llbracket code_2 \oplus code_1 \rrbracket
\]
\[
\llbracket (code_1 \oplus code_2) \oplus code_3 \rrbracket = \llbracket code_1 \oplus (code_2 \oplus code_3) \rrbracket
\]
\end{corollary}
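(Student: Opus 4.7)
The plan is to derive both claims from the conformance theorem of Section~\ref{ssec:conformance} together with the trivial fact that the unstructured projection $\cup_{sc}$ simply collects labeled instructions, so that $\cup_{sc}(code_1 \oplus code_2) = \cup_{sc}\,code_1 \cup \cup_{sc}\,code_2$ is commutative and associative at the instruction-set level. Commutativity actually admits a much shorter argument that bypasses conformance altogether: by rule D-ext,
\[
extend(code_1,code_2)(d) \;=\; \lambda S.\ S \cup d(\llbracket code_1\rrbracket(S)) \cup d(\llbracket code_2\rrbracket(S))
\]
is symmetric in its two code arguments because set union is commutative, so $extend(code_1,code_2)$ and $extend(code_2,code_1)$ are literally the same operator on denotations, and hence their least fixpoints in D-seq coincide.

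For associativity the analogous syntactic trick fails: on one side the inner fixpoint $\llbracket code_1 \oplus code_2\rrbracket$ appears, on the other $\llbracket code_2 \oplus code_3\rrbracket$, so the two $extend$ operators are literally different and one must actually reason about their fixpoints. I would route through conformance. First I would prove the auxiliary lemma that every denotation is additive,
\[
\llbracket code\rrbracket(S_1 \cup S_2) \;=\; \llbracket code\rrbracket(S_1) \cup \llbracket code\rrbracket(S_2),
\]
so that $\llbracket code\rrbracket$ is fully determined by its action on singletons $\{(tr,s,\ell)\}$. This is proved by structural induction on $code$; the base cases D-do, D-cbr, D-comm are immediate from the set-builder form of their right-hand sides, and the D-seq case is handled by fixpoint (Scott) induction on the $\mu$ inside the denotation of $\oplus$, with admissibility following from the standard fact that suprema of chains commute with finite unions, and the step case from the additivity of $\llbracket code_1\rrbracket$ and $\llbracket code_2\rrbracket$ provided by the structural induction hypothesis.

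With additivity established, conformance tells me that $\llbracket code\rrbracket(\{\sigma\})$ equals the reachability set of $\sigma$ under the multistep relation over the instruction set $\cup_{sc}\,code$. Since $\cup_{sc}((code_1 \oplus code_2) \oplus code_3)$ and $\cup_{sc}(code_1 \oplus (code_2 \oplus code_3))$ are, by unfolding of $\cup_{sc}$, the same set of labeled instructions, the associated multistep relations coincide, so the two denotations agree on every singleton input and, by additivity, on every set. The main obstacle is the Scott-induction argument in the additivity lemma -- in particular verifying admissibility and the step case in the D-seq case -- since the rest of the proof is then essentially set-theoretic rewriting on top of conformance.
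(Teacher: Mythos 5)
Your proposal is correct, and for associativity it follows the same route as the paper: the paper's entire proof is a two-line appeal to conformance, observing that since $\cup_{sc}$ forgets all structure, any two structurings of the same instruction set induce the same multistep relation and hence the same denotation. You do two things differently. First, for commutativity you give a shorter, purely syntactic argument -- $extend(code_1,code_2)$ and $extend(code_2,code_1)$ are literally the same operator by commutativity of set union, so their least fixpoints coincide without invoking conformance at all; this is valid and more elementary than the paper's uniform treatment. Second, and more substantively, you make explicit an additivity lemma ($\llbracket code\rrbracket(S_1\cup S_2)=\llbracket code\rrbracket(S_1)\cup\llbracket code\rrbracket(S_2)$, proved by structural induction with Scott induction in the $\oplus$ case) that the paper leaves entirely implicit. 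This addresses a real gap in the paper's terse argument: the conformance statement as given in Section~\ref{ssec:conformance} only characterizes $\llbracket code\rrbracket$ on singleton inputs $\{(tr_s,s,pc_s)\}$, so equality of the multistep relations only yields agreement of the two denotations on singletons; to conclude equality as functions on arbitrary sets $S$ one needs precisely your additivity lemma (or a conformance statement generalized to arbitrary sets). Your Scott-induction details check out -- the bottom denotation $\lambda S.\,\emptyset$ is additive, admissibility holds because pointwise suprema are unions and unions commute with finite unions, and the step case follows from the structural induction hypotheses for $\llbracket code_1\rrbracket$ and $\llbracket code_2\rrbracket$. So your proof is a more careful version of the paper's; the paper buys brevity at the cost of a silent appeal to additivity, while you pay for the extra lemma but obtain a fully rigorous derivation.
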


\begin{proof}
This follows directly from the conformance. As $\cup_{sc}$ removes all structure, any structure on a given set of instructions has the same semantic function.
\end{proof}

\subsubsection{Compositionality} 
\label{ssec:compositionality}
A semantics is {\it compositional}, if the semantics of a combination of components is defined directly using the semantics of the components and not details of them.

\begin{theorem}[Compositionality]
  The denotational semantics is compositional, as the semantic function for sequential composition ({\sc D-seq, D-ext}) uses solely the semantic functions of the components and does not need further details about the components. \qed
\end{theorem}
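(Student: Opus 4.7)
The plan is to verify compositionality by direct inspection of the defining rules \textsc{D-seq} and \textsc{D-ext}, since the claim is essentially a syntactic observation about how $\llbracket code_1 \oplus code_2 \rrbracket$ is built. My proof proposal therefore has essentially no mathematical content beyond unfolding definitions; the work lies in making the dependency explicit and checking that the least fixpoint used in \textsc{D-seq} is well-defined solely in terms of the component denotations.

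First I would recall from \textsc{D-seq} that $\llbracket code_1 \oplus code_2 \rrbracket = \mu d.\; extend(code_1, code_2)(d)$, a least fixpoint over the chain-complete partial order of denotations (the pointwise subset order lifted from sets of states). Then, looking at \textsc{D-ext}, I would point out that the body $\lambda S.\; S \cup d(\llbracket code_1\rrbracket(S)) \cup d(\llbracket code_2\rrbracket(S))$ mentions $code_1$ and $code_2$ only through the semantic brackets $\llbracket code_1\rrbracket$ and $\llbracket code_2\rrbracket$. In other words, $extend$ is a function of $(\llbracket code_1\rrbracket, \llbracket code_2\rrbracket, d)$ alone, with no residual dependence on the internal structure of the components.

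The potential obstacle is ensuring the least fixpoint exists and is itself determined purely by the component denotations: I would verify monotonicity (and in fact continuity) of $extend(code_1, code_2)$ in its argument $d$, which is immediate because $d$ occurs only under applications composed with set union, both of which are monotone in the subset order. Hence by the Knaster--Tarski / Kleene fixpoint theorem the least fixpoint exists and is uniquely determined by the operator, which in turn is uniquely determined by $\llbracket code_1 \rrbracket$ and $\llbracket code_2 \rrbracket$. Consequently, $\llbracket code_1 \oplus code_2 \rrbracket$ is a function of the two component denotations only, which is exactly the compositionality claim.
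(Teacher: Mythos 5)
Your proposal is correct and takes essentially the same route as the paper, which proves the theorem by the very observation embedded in its statement: \textsc{D-seq} and \textsc{D-ext} refer to $code_1$ and $code_2$ only through $\llbracket code_1 \rrbracket$ and $\llbracket code_2 \rrbracket$. Your added check that $extend$ is monotone so the least fixpoint exists and is determined by the component denotations is a reasonable bit of extra diligence, but it does not change the argument.
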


\subsubsection{Prefix Closure}
\label{ssec:prefixclosure}
A set of traces is {\em prefix closed}, if for each trace in the set, all prefixes are contained in the set, too. We need to slightly adapt this definition, as the argument set of the denotation does not need to be prefix closed. So we formulate the property as an invariant over the semantic functions (see Section \ref{sec:calculus} for a definition of invariants):  

\begin{theorem}[Prefix Closure]
If the argument set $S$ is prefix closed, so is the returned set $\llbracket code\rrbracket(S)$.
\end{theorem}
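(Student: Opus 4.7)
The plan is to proceed by structural induction on $code$, invoking fixpoint induction (as described in Section~\ref{ssec:denotationalSemantics}) for the sequential composition case. First I would fix a precise formulation of prefix closedness for a set $X$ of triples $(tr, s, pc)$: $X$ is prefix closed iff for every $(tr, s, pc) \in X$ and every prefix $tr'$ of $tr$ there exist $s', pc'$ with $(tr', s', pc') \in X$. The crucial feature of this notion is that arbitrary unions of prefix-closed sets are again prefix closed, which makes the admissibility check below essentially free.

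For the three base cases (\textsc{D-do}, \textsc{D-cbr}, \textsc{D-comm}), the right-hand side has the form $S \cup S'$, so it suffices to verify that every newly introduced trace has all its prefixes in $S \cup S'$. In \textsc{D-do} and \textsc{D-cbr} the trace component is simply copied from the witness in $S$, so the set of traces occurring in the result is exactly that of $S$ and prefix closure is inherited. In \textsc{D-comm} the only new traces are of the form $tr \cat ev$, where $(tr, s, \ell) \in S$; then $tr$ itself is in $S$, and every strictly shorter prefix of $tr \cat ev$ is a prefix of $tr$ and hence witnessed in $S$ by prefix closedness of $S$.

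For the inductive case $code = code_1 \oplus code_2$, recall that $\llbracket code_1 \oplus code_2\rrbracket = \mu d.\, extend(code_1, code_2)(d)$. I would apply fixpoint induction with the predicate
\[
P(d) \;\coloneqq\; \forall S.\; \text{prefix-closed}(S) \Longrightarrow \text{prefix-closed}(d(S)).
\]
The property $P$ holds of $\bot = \lambda S.\, \emptyset$ trivially since $\emptyset$ is prefix closed, and $P$ is admissible because a directed union of prefix-closed sets is again prefix closed. For the step, assume $P(d)$ and a prefix closed $S$. By the structural induction hypotheses for $code_1$ and $code_2$, both $\llbracket code_1\rrbracket(S)$ and $\llbracket code_2\rrbracket(S)$ are prefix closed; applying $P(d)$ to each yields that $d(\llbracket code_1\rrbracket(S))$ and $d(\llbracket code_2\rrbracket(S))$ are prefix closed; and the union of these with the already prefix-closed $S$ remains prefix closed. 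Thus $P(extend(code_1, code_2)(d))$ holds, and fixpoint induction yields the claim.

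The main obstacle, if there is one, is choosing the formulation of prefix closedness on sets of triples so that it is preserved by arbitrary unions and aligns with the co-domain of the denotations; once that is in place, admissibility is immediate and the whole argument reduces to a clean combination of structural and fixpoint induction that closely mirrors the shape of the conformance proof sketched in Section~\ref{ssec:conformance}.
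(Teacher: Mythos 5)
Your proposal is correct and follows the same essential idea as the paper's proof: the only rule that creates new traces is \textsc{D-comm}, and there the new traces are one-event extensions of traces already witnessed in $S$, so prefix closure is inherited. The paper's own proof stops at exactly that observation; you additionally make explicit the structural induction and the fixpoint induction (with the admissibility check for $P$) needed to push the property through the $\oplus$ case, which the paper leaves implicit but which is required for a fully rigorous argument.
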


\begin{proof}
The adapted prefix closed property holds: We only need to take a closer look at {\sc D-comm}, as it is the only rule changing the trace. Looking at {\sc D-comm}, we see that the yielded set is a union of the argument set $S$ and all newly calculated states.  As the new states only contain traces that are constructed by appending to existing traces, we get by using the assumption that all the prefixes of all constructed traces are also in the returned set.
\end{proof}

If we now assume that the set of initial states only contains empty traces, than the semantics for a given program returns sets which are prefix closed.

$\;$

We have shown important properties, which are a prerequisite to relate the denotational semantics and CSP and have outlined a proof that the denotational semantics relates the same states as the smallstep semantics. In the next section, we give a preliminary rule for an invariant-based Hoare calculus.

\section{Towards a Hoare calculus}
\label{sec:calculus}

Reasoning with semantics directly is generally complex. Therefore a proof calculus is usually introduced to make verification of properties manageable. We leave the proof calculus for future work, but we define an invariant and present a preliminary rule in this section.
 
In order to show properties of communicating, possibly nonterminating programs traditional Hoare logics with pre- and postconditions are not well suited, as the postcondition may never be reached. This is why we enhance the Hoare calculus with invariants to assert properties over the communication history. Using a calculus with invariants, we are still able to express properties about the program behavior even if the postcondition is never reached. This approach is, e.g., taken by Zwiers~\cite{zwiers1989compositionality}.
Invariants come quite naturally to reasoning about unstructured code, as the sequential composition is also the looping construct, and therefore already the sequential composition rule in the Hoare calculus defined by Saabas and Uustalu~\cite{Saabas05acompositional} uses invariants in pre- and postconditions.

We say an {\em invariant} $I$, holds for the semantic function of {\em code}, if it holds before, during, and after the execution for arbitrary initial states. Let $I$ be a predicate on a {\em state}, then we define formally:
\[
\inferrule*[right=Inv Intro]
{\forall S.\,(\forall s \in S.\,I(s)) \longrightarrow (\forall t \in \llbracket code\rrbracket(S).\, I(t) )}
{I \colon \llbracket code\rrbracket}
\]

In the example in the next section, we will use the following rule to reason about the components of sequential composition separately:
\[
\inferrule*[right=Inv $\oplus$]
{I \colon \llbracket code_1\rrbracket \\ I \colon\llbracket code_2\rrbracket}
{I \colon \llbracket code_1 \oplus code_2\rrbracket}
\]
We have proven soundness of the rule {\sc Inv $\oplus$} in Isabelle/HOL.
The intuition is that sequential composition leads to the repeated execution of the code sections $code_1$ and $code_2$ in a possibly alternating order, and if both keep the invariant valid, so does the sequential composition as it is compositional.  Admissibility of the invariant is ensured, as we defined it on single states.

\section{Example}
\label{sec:example}

\lstdefinelanguage{ucc}
{morekeywords={cbr, do, comm, br},
sensitive=false,
morecomment=[l]{//},
morecomment=[s]{/*}{*/},
morestring=[b]",
}

\begin{figure}
\lstset{
  language=ucc,
  tabsize=4,
}
\begin{lstlisting}[mathescape]
    1:: do $(\lambda \sigma.\; \{\sigma[free \gets \text{True}]\})$
$\oplus$    2:: comm $ef$ $f$ where
        $ef$ = $(\lambda \sigma. \;\, \{ in.x \; \,\,\mid \sigma(free) = \text{True} \wedge x \in T\}$
            $\qquad\cup \,\{out.x \mid \sigma(free) = \text{False} \wedge  x = \sigma(buffer) \} )$
        $f\,\,\,$  = $(\lambda \sigma\; event.\; \textbf{case}\; event \; \textbf{of}$
                $\mid in.x \;\,\,\Rightarrow \sigma[buffer \gets x, free \gets \text{False}]$
                $\mid out.x \Rightarrow \sigma[free \gets \text{True}])$
$\oplus$    3:: cbr $(\lambda \sigma. \; \text{True})$ 2 2

\end{lstlisting}

\caption{One Place Buffer in CUC}
\label{code1}
\end{figure}

We demonstrate the applicability of our formal framework as presented above and show for a simple implementation (written in CUC) of a one-place buffer that it is correct, i.e., it only outputs what was input before. We do so using the denotational semantics and a preliminary rule from the proof calculus. 
See Figure \ref{code1} for the code listing (we define $\oplus$ to be right associative). The code implements a buffer that can hold one value of type $T$. We explain the code line by line:\\
{\tt (1::do)} -- This is the initialization. The boolean {\it free} indicates that the {\it buffer} is ready to store data.\\
{\tt (2::comm)} -- The {\tt comm}-instruction both offers the events and changes the state after the communication happened. The events offered by $ef$ are all values of type T on channel $in$ if the buffer is free, else the output event with the value stored in the buffer is offered.
According to the event communicated, it either stores the input value and sets the buffer to not free, or it just sets the buffer to free.\\
{\tt (3::cbr)} -- The conditional branch is used in this case to model an unconditional branch and always jumps back to the {\tt comm}-instruction at label 2.\\

We show that, starting with an empty trace and with the program counter pointing to the first instruction (as described by the assertion $Pre$), only values that are input are also output (as described by $Inv$):

\[
Pre(tr, \sigma, pc) \coloneqq tr = \trace{} \wedge pc = 1
\]
\begin{align*}
&Inv(tr, \sigma, pc) \coloneqq tr \in TR_{even} \cup TR_{odd} \\
&\text{  where }
TR_{even} \coloneqq \{in.x \cat out.x \mid x \in T \}^{*{}}\\
&\text{  and } 
TR_{odd} \coloneqq \{tr\cat in.x \mid tr \in TR_{even} \wedge x \in T\}
\end{align*}

The operator $\cdot^{*{}}$ being the {\em Kleene Star}, $TR_{even}$ is the set of all {\em even} traces, i.e., where every input of  a value is followed by the output of the same value. $TR_{odd}$ is the set of {\em odd} traces, which are the even traces with an appended single input.

\noindent We name instructions by their label, e.g., in this example {\em instruction 3} means \mbox{(3:: cbr $(\lambda \sigma. \; \text{True})$ 2 2)}.

Following the structure of the code, we start reasoning about the sequential composition of instructions 2 and 3, and then combine it with instruction 1.

To show $Inv$ we need to strengthen it by adding state and program counter information, as the instructions depend on the state and on the program counter. This gives us a stronger invariant, denoted $I_{2,3}$, which we will use as invariant for instructions 2 and 3.

  \begin{align*}
        I_{2,3}(tr, \sigma, pc) \coloneqq &
        \;\;\, (tr \in TR_{even} \wedge \sigma(free) = \text{True}\\
        &\vee tr \in TR_{odd} \wedge \sigma(free) = \text{False} \\
        &\qquad \wedge \sigma(buffer) = x
        \wedge
         \exists tr'. \,tr = tr' \cat in.x) \\
         & \wedge pc \in \{2, 3\}
  \end{align*}
We use the rule {\sc Inv $\oplus$} to reason about instructions 2 and 3 separately.

\noindent $I_{2,3}\colon \llbracket2:: \texttt{comm}\; ef\; f\rrbracket$ -- Either we start with an even trace ($\in TR$) and $free = \text{True}$, then we accept input and the second disjunct holds or we start in the second disjunct with an odd trace, output the corresponding value, and then end up in the first disjunct. In both cases we end up with $pc = 3$.
 
\noindent $I_{2,3}\colon \llbracket3:: \texttt{cbr} \;(\lambda \sigma. \; \text{True})\;\; 2 \;\;2\rrbracket$ -- \texttt{cbr} does neither change state nor trace, only the program counter is set to 2, so the validity of $I_{2,3}$ is preserved by instruction 3.

We are left to show that instruction 1 brings us from the initial state to the $I_{2,3}$:
Assume $Pre$ holds, then all states reachable through instruction 1 either still fulfill $Pre$ or 
\[
I_2(tr, \sigma, pc) \coloneqq
tr = \trace{} \wedge \sigma(free) = \text{True} \wedge pc = 2
\]

where the latter implies $I_{2,3}$. As instruction 1 does not change the state when $pc \in \{2,3\}$ holds,

\[
I_{1,2,3}(tr, \sigma, pc) \coloneqq Pre(tr, \sigma, pc) \vee I_{2,3}(tr, \sigma, pc)
\]
is an invariant for instruction 1.

Following a similar argumentation, $I_{1,2,3}$ is also an invariant for instructions 2 and 3, as neither does change the state where $pc=1$ holds and neither sets $pc=1$, so in the first case the validity of $Pre$ is maintained, and in the latter case $Pre$ as a disjunct (of the postcondition) can be ignored. 
The easy combination of disjuncts is due to the fact that all instruction labels are unique. The future proof calculus will make use of this property to combine invariants.

As $I_{1,2,3}$ holds for both instruction 1 and instructions 2 and 3, we deduce using {\sc Inv $\oplus$} that $I_{1,2,3}$ holds for all three instructions.
As $Inv$ follows from $I_{1,2,3}$, we have shown that starting in $Pre$ the validity of $Inv$ is maintained by the program.

We have shown how a preliminary Hoare calculus build on the denotational semantics can be used to reason about a nonterminating program. The proved property is an assertion about traces. Being able to show properties about traces will help to relate CUC and CSP.

\section{Conclusion}
\label{sec:conclusion}

In this paper, we have defined a compositional denotational semantics for a generic low-level language. Our denotational semantics is based on least fixpoints, it copes with communication, and it enables reasoning about nonterminating programs. We have proved that it conforms to the more intuitive operational semantics of our language. We have analyzed our semantics with respect to compositionality and prefix closure and have demonstrated how it could be used to verify properties of unstructured code using a proof rule based on invariants.

In future work, we aim at using our denotational semantics as the basis for a refinement proof calculus that enables convenient conformance proofs between unstructured low-level programs and CSP-based specifications. To this end, we have based the behavioral semantics part within our denotational semantics on denotations as used in the CSP process calculus. 
To realize such a refinement proof calculus, we will first define a Hoare calculus for our semantics from which we have shown a preliminary rule. The refinement proof calculus will relate CSP processes and proof obligations in the Hoare calculus.
To fully support CSP, we will extend the denotational semantics with concurrency in the form of an {\em alphabetized parallel operator} as in CSP, i.e., synchronous progress on shared events, otherwise interleaving. To not only be able to show safety properties but also liveness properties using such a refinement proof calculus, we want to extend the denotational semantics and the Hoare calculus to cope with failures.

\newcommand{\doi}[1]{\textsc{doi}: \href{http://dx.doi.org/#1}{\nolinkurl{#1}}}
\bibliographystyle{eptcs}
\bibliography{../bibliography}

\end{document}